\newcommand{\N}{\ensuremath{{\mathbb N}}}
\newtheorem{satz}{Theorem}[section]
\newtheorem{prop}[satz]{Proposition}
\newtheorem{anmerk}[satz]{Remark}
\newcommand{\R}{\ensuremath{{\mathbb R}}}
\begin{document}
\begin{center}
{\Large{\textbf{\MakeUppercase{A method for pricing American options using semi-infinite linear programming}}}}
\end{center}
\begin{center}\vspace{.8cm}{\large{ S\"oren Christensen}} 

\large\textit{Christian-Albrechts-Universit\"at Kiel\let\thefootnote\relax\footnotetext{Mathematisches Seminar, Christian-Albrechts-Universität zu Kiel, Ludewig-Meyn-Str. 4
D-24098 Kiel, Germany, e-mail: christensen@math.uni-kiel.de }\\}
\end{center}

 We introduce a new approach for the numerical pricing of American options.
The main idea is to choose a finite number of suitable excessive functions (randomly) and to
find the smallest majorant of the gain function in the span of these functions. The resulting
problem is a linear semi-infinite programming problem, that can be solved using standard
algorithms. This leads to good upper bounds for the original problem. For our algorithms
no discretization of space and time and no simulation is necessary. Furthermore it is
applicable even for high-dimensional problems. The algorithm provides an approximation
of the value not only for one starting point, but for the complete value function on the
continuation set, so that the optimal exercise region and e.g. the Greeks can be calculated.
We apply the algorithm to (one- and) multidimensional diffusions and to Lévy processes, and
show it to be fast and accurate.\vspace{.8cm}

{Key Words:} American options, optimal stopping, excessive functions, upper bounds, semi-infinite linear programming\vspace{.8cm}


\section{Introduction}\label{sec:intro}
Pricing American type options on multiple assets is a challenging task in mathematical finance and is important both for theory and applications. The problem to be solved is an optimal stopping problem. These problems play an important role in many other fields of applied probability, too. They also appear, for example, in mathematical statistics and portfolio optimization. Although a general theory is well developed (cf. e.g. \cite{ps}), the value and the optimal strategy in optimal stopping problems cannot be found explicitly in most situations of interest. Many approaches have been proposed for a numerical solution of optimal stopping problems in the last years.\\
For pricing the standard American options in the Black-Scholes market with one underlying the most prominent methods are algorithms based on backward induction, partial differential equation methods and integral equation methods, cf. e.g. \cite[Chapter 8]{De}. But these techniques are limited to low dimensional problems. Most techniques used today for more complex options are based on Monte Carlo simulation techniques that were developed in the last years, see \cite[Chapter 8]{Gla} for an overview. We only want to mention stochastic mesh- and regression based-methods, that are often combined with using a duality method.\\
A not that popular class of algorithms uses the linear programming approach. The basic idea is to approximate the underlying process by a Markov chain with a finite state space and to rewrite the resulting optimal stopping problem into a linear program and to solve this problem using standard techniques, cf. e.g. \cite{CS} and the references therein for infinite time horizon problems. By the curse of dimensionality this approach is limited to low dimensional problems.\\
Our approach is different in nature to all approaches described above. The result of this algorithm is an analytic approximation to the value function in the continuation set without using discretization. This basic idea is described in the following section and it is shown how semi-infinite linear problems come into play. In Section \ref{sec:cutting} we discuss how the well-known cutting plane algorithm can be used to solve such problems. In Section \ref{sec:diffusions} we motivate the further steps by discussing optimal stopping problems with infinite time horizon for one-dimensional diffusion processes. In Section \ref{sec:appendix} we give a theoretical explanation for the accuracy of the algorithm locally around the optimization point. The ideas described so fare are then applied to the more interesting case of American options on one and more assets with finite time horizon in Section \ref{sec:finite_time}. In this section the calculation of the Greeks and implied-volatility problems are also discussed. In Section \ref{sec:infinite_time} we show how the algorithm can be applied for multidimensional diffusions and Lévy processes with infinite time horizon. Summarizing the results we can say that our algorithm provides good upper bounds for the value. In Section \ref{sec:lower_bounds} we shortly discuss how it can be used to also obtain lower bounds. Finally we give a short conclusion in Section \ref{sec:conclusion}. 

\section{The approach}
We consider a Markovian problem of optimal stopping as follows:\\
Let $(X_t)_{t\geq 0}$ be a continuous time strong Markov process with state-space $E$, $g:E\rightarrow[0,\infty)$ measurable, $T\in(0,\infty]$ and $r\geq0$. We would like to maximize the expected value 
\[E_x(e^{-r\tau}g(X_\tau))\]
over all stopping times $\tau\leq T$ with respect to the underlying filtration for all starting points $x\in E$. If $T=\infty$ we say that we have an infinite time horizon, if $T<\infty$ we speak of a finite time horizon. For convenience  we assume $T=\infty$ in this section, then the value function does not depend on $t$; this is no real restriction, see \cite[Chapter I]{ps}; in this reference all the following basic facts can also be found.\\
We define the value function $v:E\rightarrow\R$ by
\[v(x)=\sup_\tau E_x(e^{-r\tau}g(X_\tau)).\]
If we know the value function $v$, then the optimal stopping problem is solved, but in most situations of interest it is not possible to give an explicit expression for $v$. From the theory for optimal stopping Markovian problems it is well-known that under minimal conditions the function $v$ can be characterized as the smallest $r$-excessive function w.r.t. $X$ that majorizes $g$, i.e. for a fixed $x_0\in E$ it holds that
\[v(x_0)=\inf\{h(x_0):h\mbox{~is $r$-excessive,}~h\geq g\}.\]
Here $r$-excessive functions are the class of functions, that correspond to the standard supermartingales for Markov processes, see e.g. \cite{ps} and the references therein. This formulation corresponds to the characterization of the value process as the smallest supermartingale dominating the gain process in the general setting. Unfortunately the space of $r$-excessive functions for a process $X$ is very wide in general, so that this characterization can be used  for an explicit determination of the value only in some very special settings.\\
Nonetheless using standard terms of optimization theory $v(x_0)$ can be seen as the solution to the following problem:
\begin{align*}
\min ! ~~~~~&h(x_0)\\
\mbox{s.t.~~~~}&h(x)\geq g(x)\mbox{~~~for all $x\in E$,}\\
~&h\mbox{~is $r$-excessive.}
\end{align*}
To see the direct connection to linear programming, let us rewrite the problem as follows: By Martin boundary theory -- cf. \cite{KW} --  under weak conditions on the process $(X_t)_{t\geq 0}$ each $r$-excessive function $h$ can be represented as 
\begin{equation}\label{int_repr}
h(\cdot)=\int_B k_b(\cdot)\pi(db),
\end{equation}
where $B$ is a compact space (called minimal Martin boundary), $k_b,b\in B,$ are the minimal $r$-excessive functions and $\pi$ is a measure on $B$. Therefore we see that the optimization problem described above can be seen as an linear infinite optimization problem:
\begin{align*}\label{LIP}
\min ! ~~~~~&\int_B k_b(x_0)\pi(db)\tag{LIP}\\
\mbox{s.t.~~~~}&\int_B k_b(x)\pi(db)\geq g(x)\mbox{~~~for all $x\in E$,}\\
~&\pi\mbox{~is a measure on $A$.}
\end{align*}
Here infinite means that we have infinitely many restrictions (since we always assume $E$ to be infinite) and we optimize over an infinite-dimensional space of measures. One standard way to treat this problem is to discretize the state space $E$. This leads to an ordinary linear programming problem, but this problem is solvable only for low-dimensional spaces $E$ by the curse of dimensionality.\\ 
The basic idea of our approach is to approximate the value of this linear infinite programming problem by reducing the problem to a semi-infinite linear programming problem by choosing a finite dimensional subspace of the measure space:

\begin{enumerate}
\item Fix $n\in\N$ and choose (in a suitable way) a finite subset $H:=\{h_1,...,h_n\}$ of $r$-excessive functions (equivalently choose $n$ measures $\pi_1,...,\pi_n$).
\item Solve the linear semi-infinite programming problem
\begin{align}\label{lp}
\min ! ~~~~~&\sum_{i=1}^n\lambda_i h_i(x_0)\tag{LSIP}\\
\mbox{s.t~~~~~}&\sum_{i=1}^n\lambda_i h_i(x)\geq g(x)\mbox{~~~for all $x\in E$}\notag
\end{align}
\end{enumerate}

\begin{anmerk}\label{rem:choos}
One way for choosing a suitable set $H$ is the following, that will be used in the examples below:\\
Take a subset $H'$ of the set of all $r$-excessive functions, that can be parametrized as $H'=\{h_a:a\in A\}$. Then choose random parameters $a^{(1)},...,a^{(n)}\in A$ (e.g. randomly with respect to a suitable probability distribution on $A$) and write $h_i:=h_{a^{(i)}},~ i=1,...,n$.
\end{anmerk}

One immediately obtains the following fact.

\begin{prop}
If $(\lambda_1^*,...,\lambda_n^*)$ is a solution of the linear semi-infinite programming problem (\ref{lp}), then $h^*:=\sum_{i=1}^n\lambda_i^* h_i(x)$ is an upper bound for $v(x)$ for all $x\in E$.
\end{prop}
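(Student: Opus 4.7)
The plan is to invoke the characterization of $v$ given earlier in the excerpt, namely that $v$ is the pointwise infimum of all $r$-excessive functions $h$ satisfying $h \geq g$. Once we check that $h^*$ is itself an admissible competitor in this infimum, the inequality $v(x) \leq h^*(x)$ for every $x \in E$ will be immediate.

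The first thing to verify is the majorization property. This is built directly into the feasibility constraint of (\ref{lp}): since $(\lambda_1^*,\ldots,\lambda_n^*)$ is feasible, we have $h^*(x)=\sum_{i=1}^n \lambda_i^* h_i(x) \geq g(x)$ for every $x \in E$. So the majorization property transfers costlessly.

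The more delicate point is to verify that $h^*$ is itself $r$-excessive. The class of $r$-excessive functions is a convex cone: it is closed under addition and under multiplication by non-negative scalars (this is immediate from the supermartingale characterization, since $e^{-rt}\sum_i \lambda_i^* h_i(X_t)$ remains a supermartingale whenever each $e^{-rt}h_i(X_t)$ is and all $\lambda_i^*\geq 0$, and the pointwise right-continuity / lower semi-continuity properties are preserved). Thus the only thing one must invoke is that $\lambda_i^* \geq 0$. This sign constraint is implicit in the derivation of (\ref{lp}) from (\ref{int_repr}): each $h_i$ corresponds to a non-negative measure $\pi_i$ on the minimal Martin boundary, and the combination $\sum_i \lambda_i^* h_i$ corresponds to the measure $\sum_i \lambda_i^* \pi_i$, which is a bona fide (non-negative) measure precisely when all $\lambda_i^*$ are non-negative. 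I would therefore state the proposition with an explicit assumption $\lambda_i^*\geq 0$, or equivalently read it into the definition of (\ref{lp}).

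With $h^*$ shown to be $r$-excessive and to dominate $g$, the characterization
\[
v(x) = \inf\{h(x) : h\text{ is $r$-excessive},\ h\geq g\}
\]
applied at an arbitrary $x \in E$ yields $v(x) \leq h^*(x)$, which is the claim. The only subtle step is the sign of the multipliers; everything else is a one-line application of the excessive-majorant characterization already recalled in the excerpt.
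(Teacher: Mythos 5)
Your proposal is correct and follows essentially the same route as the paper, whose proof is exactly the one-liner you describe: $h^*$ is an $r$-excessive majorant of $g$, so the excessive-majorant characterization of $v$ gives $v(x)\leq h^*(x)$ for all $x\in E$. Your extra attention to the sign constraint $\lambda_i^*\geq 0$ is a worthwhile refinement the paper omits --- as written, (\ref{lp}) does not state non-negativity of the $\lambda_i$, and without it the conic-combination argument for excessivity would fail, so reading $\lambda_i\geq 0$ into the problem (as you do, consistent with the derivation from the measure representation (\ref{int_repr})) is indeed the right fix.
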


\begin{proof}
The function $\sum_{i=1}^n\lambda_i^* h_i$ is an $r$-excessive function majorizing $g$. By the characterization of the value function as smallest $r$-excessive function majorizing $g$ we obtain the result.
\end{proof}

Note that although we only considered one special point $x_0$ for the optimization the function $h^*$ is is an upper bound for the value function $v$ on the whole domain $E$. As we will see later in many situations this is even a good upper bound on a huge neighborhod of $x_0$. But before we can apply this algorithm, the first question that arises is how linear semi-infinite programming problems of the type (\ref{lp}) can be solved:

\section{Cutting plane method for solving linear semi-infinite programming problems}\label{sec:cutting}

The theory for solving linear semi-infinite programming problems of the type (\ref{lp}) is well developed. A good overview is given in \cite{HK} where theory and algorithms are discussed. One of the key solution techniques is the so-called ''cutting plane algorithm''. It is based on solving a sequence of ordinary linear programming problems, where in each step one further constraint is added based on the results obtained so far. To be more precise in step $k$ one considers $\{x_1,...,x_k\}$ and solves

\begin{align}\label{LPk}
\min ! ~~~~~&\sum_{i=1}^n\lambda_i h_i(x_0)\tag{$LP_k$}\\
\mbox{s.t.~~~~~}&\sum_{i=1}^n\lambda_i h_i(x_j)\geq g(x_j)\mbox{~~~for each $j\in\{1,...,k\}$}\notag
\end{align}
If $(\lambda_1^{(k)},....,\lambda_n^{(k)})$ is an optimal solution to (\ref{LPk}) one chooses $x_{k+1}$ as a minimizer of the function 
\[x\mapsto \sum_{i=1}^n\lambda_i^{(k)} h_i(x)-g(x)\]
and uses the set $\{x_1,...,x_{k+1}\}$ in the next step. This sequence converges to the optimal solution under mild restrictions. This algorithm and a relaxed version are e.g. discussed in \cite{WFL}\\

We see that a semi-infinite linear programming problem is reduced to a sequence of standard linear programs. The background for the algorithm to work is the reduction theorem for semi-infinite linear programs, that states that the infinite restriction set $E$ can be reduced to a set of not more than $n$ points, cf. \cite[Chapter 10.2]{Ko}. The points $x_1, x_2,...$ are approximations to these points. Therefore the number of iteration steps in the cutting plane algorithm depends on the number $n$ of chosen excessive functions and not on the dimension of the underlying space. This is the key point for the applicability of our algorithm in higher dimensions. The standard approach to solve optimal stopping problems using linear programming is based on the discretization of the state space $E$, cf. e.g. \cite{CS}. Due to the curse of dimensionality this approach is limited to low dimensions.\\

For our approach to work we have to find suitable choices for $A$ and $H'$ (see Remark \ref{rem:choos}). As a motivation for the following sections we first consider one-dimensional diffusion processes with infinite time horizon:

\section{One-dimensional diffusions with infinite time horizon}\label{sec:diffusions}
One-dimensional diffusions have a wide range of applications e.g. in mathematical finance, mathematical biology, stochastic control and economics. We follow the definition given in \cite[Chapter VII.3]{RY} that is based on the work of Feller and Itô and McKean (cf. \cite{IM}), i.e. we assume that the process is a strong Markov process with continuous sample paths on an interval $E$. To prevent that the interval $E$ can be decomposed into disjoint subintervals from which $(X_t)_{t\geq0}$ cannot exit, we always assume that all diffusions are regular, that is
\[P_x(X_t=y\mbox{~for some~}t\geq 0)>0\mbox{~~~~~for all~}x\in int(I),~y\in I.\]
To find a suitable set $H$ of $r$-excessive functions we consider the functions
\[\psi_{1}(x)=\begin{cases}
  E_x(e^{-r\tau_a}\mathds{1}_{\{\tau_a<\infty\}}),  & x\leq a \\
  [E_a(e^{-r\tau_x}\mathds{1}_{\{\tau_x<\infty\}})]^{-1}, & x>a
\end{cases}\]
and 
\[\psi_{2}(x)=\begin{cases}
  [E_a(e^{-r\tau_x}\mathds{1}_{\{\tau_x<\infty\}})]^{-1},  & x\leq a \\
  E_x(e^{-r\tau_a}\mathds{1}_{\{\tau_a<\infty\}}), & x>a,
\end{cases}\]
for a fixed point $a\in int(E)$. These functions are called the minimal $r$-harmonic functions. Obviously $\psi_1$ is increasing and $\psi_2$ is decreasing. Furthermore they are positive, continuous and can be used to characterize the boundary behavior of $(X_t)_{t\geq0}$. For results in this direction we refer to \cite[Section 4.6]{IM}. All other positive $r$-harmonic functions are linear combinations of $\psi_1$ and $\psi_2$. 

In this section we want to study optimal stopping problems for one-dimensional diffusions with infinite time horizon. These problems can be solved analytically using different techniques; we only refer to \cite{m}, \cite{sa}, \cite{BL00}, \cite{dk} and \cite{CI2}. Nonetheless in many situations it can be helpful to use numerical methods. The following theorem guarantees that $H:=\{\psi_1,\psi_2\}$ is a reasonable choice for our algorithm to work well.

\begin{satz}\label{satz:value}
\begin{enumerate}[(a)]
Fix $x_0\in E$.
\item $v(x_0)$ is the value of the problem
\begin{align*}
\min ! ~~~~~&\lambda_1 \psi_1(x_0)+\lambda_2 \psi_2(x_0)\\
\mbox{s.t~~~~~}&\lambda_1 \psi_1(x)+\lambda_2 \psi_2(x)\geq g(x)\mbox{~~~for all $x\in E$}
\end{align*}
\item If $(\lambda_1,\lambda_2)$ is a solution to the problem above, then $v(x)=\lambda_1 \psi_1(x)+\lambda_2 \psi_2(x)$ for all $x$ in the connection component of $x_0$ in the continuation set.
\item If an optimal stopping time exists, then 
\[\tau_x=\inf\{t\geq0: g(X_t)=\sum_{i=1}^2\lambda_i \psi_i(X_t)\}\]
is optimal under $P_x$.
\end{enumerate}
\end{satz}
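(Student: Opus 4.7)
My plan is to reduce part (a) to the Dayanik--Karatzas characterization of the value function as a smallest concave majorant after a change of variables. Set $F(x) := \psi_2(x)/\psi_1(x)$, which is strictly decreasing; let $y := F(x)$, $y_0 := F(x_0)$, $\tilde v(y) := v(F^{-1}(y))/\psi_1(F^{-1}(y))$ and $\tilde g(y) := g(F^{-1}(y))/\psi_1(F^{-1}(y))$. Then (see \cite{dk}) $\tilde v$ is the smallest concave majorant of $\tilde g$ on $F(E)$. Dividing the constraint $\lambda_1 \psi_1 + \lambda_2 \psi_2 \geq g$ by $\psi_1$ turns it into the affine constraint $\lambda_1 + \lambda_2 y \geq \tilde g(y)$, while the objective becomes $\lambda_1 + \lambda_2 y_0$. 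Any affine majorant of $\tilde g$ is a concave majorant and therefore dominates $\tilde v$; conversely, a supporting line of the concave function $\tilde v$ at $y_0$ lies above $\tilde g$. Hence $\tilde v(y_0) = \inf\{\lambda_1 + \lambda_2 y_0 : \lambda_1 + \lambda_2 y \geq \tilde g(y) \text{ for all } y \in F(E)\}$, and multiplying by $\psi_1(x_0)$ yields (a).

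For part (b), fix an optimal pair $(\lambda_1,\lambda_2)$ and set $h := \lambda_1 \psi_1 + \lambda_2 \psi_2$. Since $h \geq g \geq 0$ and $\psi_1,\psi_2$ are $r$-harmonic, an optional-sampling plus Fatou argument (applied to the non-negative martingale $e^{-rt}h(X_t)$) shows that $h$ dominates $v$ on all of $E$. On the continuation component $C_0$ the value function $v$ is known to be $r$-harmonic for a regular diffusion, so $u := h - v \geq 0$ is $r$-harmonic on $C_0$ and, by (a), vanishes at the interior point $x_0$. This makes $x_0$ an interior local minimum of $u$, so $u'(x_0) = 0$ as well, and uniqueness for the second-order linear ODE $\tfrac{1}{2}\sigma^2 u'' + \mu u' - ru = 0$ (satisfied by every $r$-harmonic function on the interior of $C_0$) forces $u \equiv 0$ on $C_0$. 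Equivalently, in the transformed picture the affine function $\tilde h(y) := \lambda_1 + \lambda_2 y$ and the affine restriction $\tilde v|_{F(C_0)}$ touch from above at an interior point, which forces equality of slopes and hence of the two affine functions.

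For part (c), under the standing existence hypothesis $\tau^* := \inf\{t \geq 0 : v(X_t) = g(X_t)\}$ is $P_x$-a.s.\ optimal, so it suffices to show $\tau_x = \tau^*$ for $x \in C_0$. On $C_0$ we have $h = v > g$ by (b), so $X_t \notin \{h = g\}$ for $t < \tau^*$, giving $\tau_x \geq \tau^*$. Continuity of $\psi_1, \psi_2, v, g$ extends the identity $h = v$ to $\overline{C_0}$, and together with $v = g$ on $\partial C_0$ this forces $h(X_{\tau^*}) = g(X_{\tau^*})$, so $\tau_x \leq \tau^*$. The hardest part will be to carry out the Dayanik--Karatzas transformation rigorously in the generality of the paper, in particular verifying that $F$ is a homeomorphism onto its image, handling the boundary behavior of $E$ (absorbing, entrance, natural), and confirming that the affine representation of $v$ on $C_0$ extends to a globally feasible $h$; once this is in place, (b) and (c) follow from standard martingale and convex-analytic arguments.
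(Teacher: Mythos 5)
Your proposal is correct in substance, but it takes a genuinely different route from the paper, whose own proof of Theorem \ref{satz:value} contains almost no argument: part (a) is handled purely by citation (to \cite{HS}, Theorem 4.2, and \cite{CI2}, Corollary 2.2), part (b) by appeal to the strong maximum principle of Proposition \ref{prop:maximumprinc} (an $r$-harmonic majorant $h$ of $v$ with $h(x_0)=v(x_0)$ gives $h-v\geq 0$, $r$-harmonic, vanishing at $x_0$, hence constant on $C$), and part (c), as in your proposal, by combining (b) with continuity of sample paths and the fact that $\tau^*=\inf\{t\geq 0: v(X_t)=g(X_t)\}$ is the smallest optimal time. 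You instead prove (a) from scratch via the Dayanik--Karatzas transformation, identifying feasible pairs $(\lambda_1,\lambda_2)$ with affine majorants of $\tilde g$ and playing a supporting line of the concave envelope against them --- this makes explicit the duality that underlies the cited results --- and you prove (b) not by a maximum principle but by the two-dimensionality of the space of $r$-harmonic functions on $C_0$: after the transformation, $\tilde h$ and $\tilde v$ restricted to $F(C_0)$ are both affine and touch at an interior point, hence coincide. Your route buys a self-contained, elementary argument that fully exploits the one-dimensional structure; the paper's maximum-principle formulation is weaker as an explanation here but is precisely what carries over to the multidimensional and time-dependent settings of the later sections, which is why the paper routes (b) through Proposition \ref{prop:maximumprinc}.

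Three caveats, none fatal. First, in (b) the step ``$u'(x_0)=0$ plus uniqueness for the ODE $\frac{1}{2}\sigma^2u''+\mu u'-ru=0$'' presupposes smooth coefficients, which a general regular diffusion in the It\^o--McKean sense of Section \ref{sec:diffusions} need not have; your alternative affine-touching argument (a nonnegative affine function on an interval vanishing at an interior point is identically zero) is the robust version and should be the primary one. Second, in (a) the correct statement from \cite{dk} is that $\tilde v$ is the smallest \emph{nonnegative} concave majorant of $\tilde g$; this is harmless since $g\geq 0$ forces every affine majorant of $\tilde g$ to be nonnegative on the interval $F(E)$, but it should be said, and the supporting line exists only when $y_0$ is interior to $F(E)$ --- exactly the boundary-behavior issues you flag yourself, and which the paper delegates wholesale to \cite{HS} and \cite{CI2}. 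Third, your (c) proves optimality of $\tau_x$ only for $x$ in the component $C_0$; this restriction is real (outside $\overline{C_0}$ one may have $h>g=v$, so $\tau_x$ need not be optimal there), but the paper's own one-line proof of (c) suffers the same limitation, so you have matched, and indeed made explicit, the intended scope of the statement.
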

\begin{proof}
\begin{enumerate}[(a)]
\item Two proofs based on different methods can be found in \cite[Theorem 4.2]{HS} and in \cite[Corollary 2.2]{CI2}.
\item This is a general fact, see Proposition \ref{prop:maximumprinc} in the following section.
\item If an optimal stopping time exists, then by the general theory the smallest is given by $\tau^*=\inf\{t\geq0:v(X_t)=g(X_t)\}$. Since $(X_t)_{t\geq 0}$ has continuous sample paths the assertion holds by the previous point.
\end{enumerate}
\end{proof}

We consider $H:=\{\psi_1,\psi_2\}$. The theorem states that the value of the linear semi-infinite programming problem (\ref{lp}) is equal to $v(x_0)$ (and not only an upper bound). 
Our algorithm provides an accurate way for solving these problems using our approach. It is very easy to implement in every common language. This and all following examples were implemented in Matlab on a standard PC with 1.3 GHz. We used the cutting plane method to solve (\ref{lp}). This works fine and gives the results after some steps of iteration.\\
As an example we consider the gain function $g(x)=x^2$ for a standard Brownian motion $X$. Using our algorithm after 5 steps of iteration the linear semi-infinite programming problem reaches the solution $v(0)=5.322$ and one furthermore obtains $v(x)=2.661\psi_1(x)+2.661\psi_2(x)$ for $x\in[-4.618,4.618]$, where $\psi_1(x)=e^{0.447x},\psi_2(x)=e^{-0.447x}$. Moreover the optimal stopping time is $\inf\{t\geq 0: X_t\not\in [-4.618,4.618]\}$. A graphical illustration can be found in the following figure.
\begin{figure}[ht]
\begin{center}
\includegraphics[width=10cm]{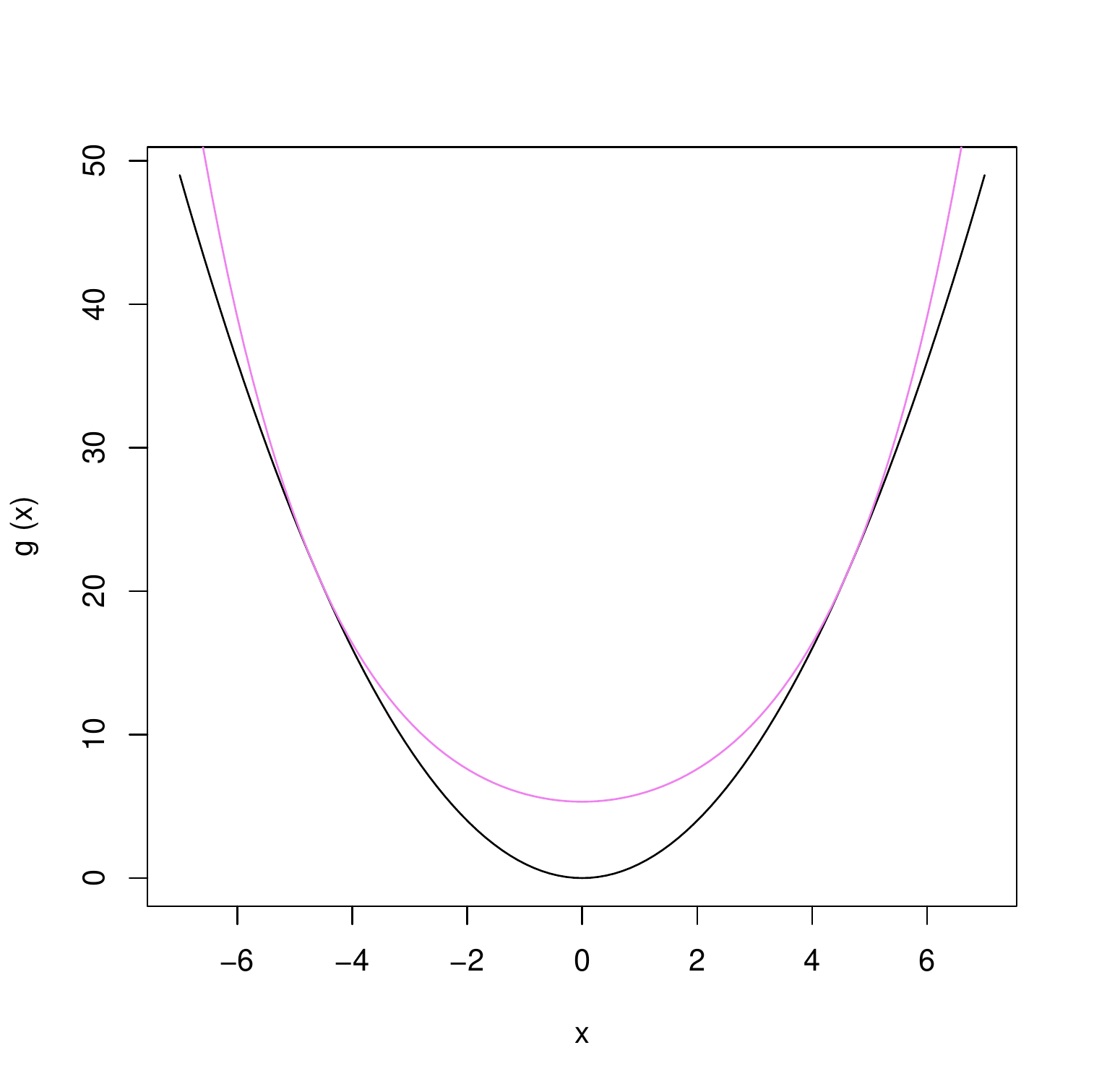}
\caption{Example for one-dimensional diffusions and infinite time horizon}
\end{center}
\end{figure}

This example is of course not that impressing since optimal stopping problems of this type can -- in many cases -- even be solved analytically by standard techniques such as a free boundary approach. But nonetheless this example is instructive for dealing with other problems. We can summarize the results as follows:
\begin{itemize}
\item In the definition of the set $H$ one can restrict oneself to $r$-harmonic functions (instead of general $r$-excessive ones). 
\item Optimizing for one point $x_0$ in the continuation set yields the value function for the whole connection component of the continuation set containing $x_0$.
\item One also obtains the optimal stopping time.
\end{itemize}

\section{Approximation in the connection component of the continuation set}\label{sec:appendix}
Next we give the theoretical background for the observation that using an $r$-harmonic function $h$ as an approximation to the value function in a fixed point $x_0$ yields a good approximation of the value function on the connection component of the continuation set containing $x_0$ for a wide class of Markov processes (see Theorem \ref{satz:value}(b) above and the numerical results in the next sections):\\
Let $C$ denote the connection component of the continuation set that contains $x_0$. We assume that the underlying Markov process fulfills a strong maximum principle on $C$, i.e. we assume that each function that is $r$-harmonic and attains its non-negative maximum in $C$ is constant. This principle is well known for certain processes as diffusions under mild conditions, cf. e.g. \cite[p. 84]{Pi}, and for further classes of processes. Under these assumption we have:

\begin{prop}\label{prop:maximumprinc}
Let $h$ be an $r$-harmonic majorant of $v$ with $h(x_0)=v(x_0)$.\\
Then $h\rvert_C=v\rvert_C$.
\end{prop}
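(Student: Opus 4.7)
The plan is to apply the strong maximum principle to the difference $v - h$ on the connection component $C$.

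First I would establish that $v$ itself is $r$-harmonic on the continuation set (and hence on $C$). This is a standard fact from general optimal stopping theory: on the set where $v > g$, one can postpone stopping for a short time without loss, which translates into the mean-value / martingale property defining $r$-harmonicity. Under the regularity assumptions implicit in the paper's setting (so that $v$ is lower semicontinuous and the continuation set is open) this is classical; I would just cite the relevant result from \cite{ps}.

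Next I would form $w := v - h$ on $C$. Since $h$ is $r$-harmonic on all of $E$ by assumption and $v$ is $r$-harmonic on $C$ by the previous step, and $r$-harmonic functions form a linear space, $w$ is $r$-harmonic on $C$. By hypothesis $h \geq v$ everywhere, so $w \leq 0$ on $E$ and in particular on $C$; moreover $w(x_0) = v(x_0) - h(x_0) = 0$. Thus $w$ attains the value $0$ at the interior point $x_0 \in C$, and $0$ is (weakly) the maximum of $w$ on $C$. Since $0$ is non-negative, the strong maximum principle assumed on $C$ forces $w$ to be constant on $C$. That constant must equal $w(x_0) = 0$, giving $v|_C = h|_C$, as required.

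The main obstacle is the first step: making sure the hypotheses under which $v$ is genuinely $r$-harmonic on the continuation set are compatible with the class of processes for which the strong maximum principle is being invoked. For the diffusions cited in the paper (\cite{Pi}) this is automatic, and for the further classes alluded to one has to check that the same regularity of $v$ (continuity, the mean-value property on small balls contained in $C$) is available. Once that compatibility is in place, the argument is a two-line application of linearity of $r$-harmonicity plus the maximum principle; no further calculations are needed.
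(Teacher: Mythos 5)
Your proof is correct and takes essentially the same route as the paper: the paper sets $\tilde{h}=h-v$, notes it is $r$-harmonic on $C$ because $v$ is, non-negative, and vanishes at $x_0$, and concludes constancy from the maximum principle. Your only (harmless) deviations are the sign convention $w=v-h$, which in fact matches the paper's stated form of the strong maximum principle (non-negative maximum attained in $C$) more directly, and your explicit first step establishing that $v$ is $r$-harmonic on the continuation component, which the paper simply asserts.
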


\begin{proof}
Since $v$ is $r$-harmonic in $C$ so is $\tilde{h}=h-v$. Furthermore $\tilde{h}\geq0$ and $\tilde{h}(x_0)=0$. Therefore by the maximum principle $\tilde{h}$ is constant in $C$, i.e. $h\rvert_C=v\rvert_C$.
\end{proof}
Now we apply the ideas obtained so fare to the more interesting situation of finite time horizon and multidimensional diffusions:

\section{Time-dependent gain}\label{sec:finite_time}
In many applications the gain also depends on time, e.g. in mathematical finance one often considers problems with a finite time horizon. In this case the value function is an space-time $r$-excessive functions. Before we come to applications we first discuss how the transition densities come into play. To this end we use the integral representation as given in (\ref{int_repr}) for this case:\\
For a standard Brownian motion it is well known that each excessive function can be written as an integral taken over the densities of the Gaussian semigroup (cf. \cite{Si}). More recently this result was extended to a much more general setting in \cite{Ja}. Different results are given there. We only state the following special fact, that is useful for us:\\

\textit{Under some mild technical assumptions if the underlying transition semigroup is a convolution semigroup on a locally compact Abelian group $E$, then each each space-time excessive function $h$ has the representation
\[h(t,x)=\int 1_{(s,\infty)}(t)p_{t-s}(x,y)\pi(dy,ds),~~x\in E,t>0,\]
where $\pi$ is a measure on $E\times[0,\infty)$ and $p_t(x,y)$ is a suitably chosen density of the semigroup.}\\
Standard examples for densities, that are often used are the standard $d$-dimensional Brownian motion, where
\[p_t(x,y)=\frac{1}{\sqrt{2\pi t}^d}\exp \left(-\frac{|x-y|^d}{2t}\right)\]
and for Cauchy processes, where
\[p_t(x,y)=\frac{at}{(t^2+|x-y|^2)\frac{n+1}{2}},~~~~a=\nicefrac{\Gamma\left(\frac{n+1}{2}\right)}{\left(\pi\frac{n+1}{2}\right)}.\]
With this theoretic result in mind we can treat the well known examples from mathematical finance:

\subsection{American put in the Black-Scholes model}\label{put_endlich}

\begin{table}
\begin{center}
\begin{tabular}[h]{|l||c|c|c|c|}
  \hline
  $x$ & $v(x)$& RLP & |$v(x_0)$-RLP|&time\\
  \hline\hline
  $80$ & 21.606 & 21.615 &0.009&\\
	\hline
	$90$ & 14.919 & 14.923 &0.004&\\
	\hline
	$100$ & 9.946 &  9.951 &0.005& 9.4s\\
	\hline
  $110$ & 6.435 & 6.439 &0.004&\\
	\hline
	$120$ & 4.061 & 4.064 &0.003&\\
		\hline
\end{tabular}
\end{center}
\caption{Approximation of the ``true'' value $v(x)$ (taken form \cite{AC}) for the American put problem on one asset with time horizon $T=0.5$. RLP denotes the values for our algorithms. We applied the algorithm for the optimization point $x_0=100$ and obtained the other values from this optimization as described above. The parameters are $d=1, K=100, r=0.06, T=0.5, \sigma=0.4$.}
\end{table}

\begin{figure}[ht]\label{fig:value3d}
\begin{center}
\includegraphics[width=13cm]{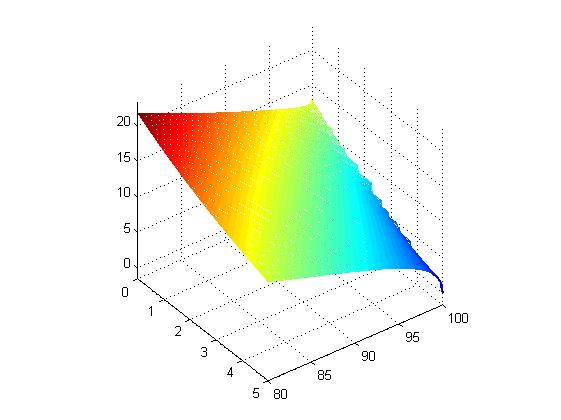}
\caption{Graph of the approximated value function $h^*$}
\end{center}
\end{figure}

As an example we consider a market Black-Scholes-market where the asset price process $X$ is a geometric Brownian motion under the risk neutral measure, that solves 
\[dX_t=rX_tdt+\sigma X_tdW_t\]
for a Brownian motion $W$. Although our approach is also applicable for other gain functions, in this subsection we concentrate on the fair price for an American put on $X$ with strike $K$ and maturity $T$ as given by
\[v(t,x)=\sup_{\tau\leq T-t}E_{(t,x)}(e^{-r\tau}(K-X_{t+\tau})^+),\]
since this example is well studied from different points of view. No closed form solutions are known for this problem, but many numerical methods are developed, cf. e.g. \cite{De} for an overview.\\
The transition density $p$ is given by
\[p_t(x,y)=\frac{\sigma}{2\sqrt{2\pi t}}(xy)^{-\nu}\exp\left(-\frac{\sigma^2\nu^2t}{2}-\frac{(\log(y)-\log(x))^2}{2\sigma^2t}\right),\]
where $\nu=\mu/\sigma^2-1/2$, cf. \cite{BS}, p. 132. The first idea to apply our algorithms is now to take these densities. But one sees that $p$ has a singularity for $t=0$. Therefore these densities are no good choice, since linear combinations cannot dominate the gain function. Therefore we take integrated versions of the density. The easiest such functions are  the $r$-harmonic function given by
\begin{align*}
h_a(t,x)&=E_{(t,x)}(e^{-r(T-t)}\mathds{1}_{\{X_T\geq a\}})\\
&=e^{-r(T-t)}\Phi\left(-\frac{\log(x/a)+(r-\sigma^2/2)(T-t)}{\sigma\sqrt{T-t}}\right)
\end{align*}
for $a \in(0,\infty)=:A$. Using these functions we can apply our algorithm and compare the results to prices taken from \cite{AC}. The results can be found in the following table. We obtained the data in the following way:\\
We use our approach with the starting value $x_0=100, t_0=0$ and $n=100$ and choose the parameters $a_1,....,a_n$ according to a uniform distribution on $[0,100]$. Applying the optimization takes around 10 second. Then we obtained the value at $x_0=100$ and $t_0=0$ and obtained parameters $\lambda_1^*,...,\lambda_n^*$ such that the function $h^*:=\sum_{i=1}^n\lambda_i^* h_{a_i}$ is an upper bound of the value function $v$. Then we used $h^*$ to get the upper bounds for other starting values by just evaluating this function at the desired point $(t,x)$. Although $h^*$ is optimized for the point $(t_0,x_0)=(0,100)$ the upper bounds for the other points are very good too for other starting prices as shown in the table. Hence we only need one approximation for all time horizons and starting values in the connected component of the continuation set, see Figure \ref{fig:value3d}. Hence we have found an analytically given function that is a good approximation to the value function on a huge subset of the time-space domain. With the results of Section \ref{sec:appendix} in mind this is not surprising.

\subsection{American min put on $d$ assets}
As discussed in the introduction it is much more challenging to consider multiple underlyings, i.e. the case that $X=(X^{(1)},...,X^{(d)})$ is a diffusion in a subset of $\R^d$. As an example we consider the multi-dimensional Black-Scholes market, i.e. $X^{(1)},...,X^{(d)}$ are geometric Brownian motions with fixed correlations of the underlying Brownian motions. One benchmark example in the literature is a put option on the minimum of the assets in a Black-Scholes market, i.e.
\[g(x_1,...,x_d)=\left(K-\min_{i\in\{1,...,d\}}x_i\right)^+.\]
We compare our results to the numerical results given in \cite[Section 4.2]{Ro}. With the same motivation as for one underlying we could choose the set $H'$ of $r$-harmonic functions to consist of the functions
\[h_a(t,x)=E_{(t,x)}\left(e^{-r(T-t)}\mathds{1}_{\{X_T^{(1)}\geq a_1,...,X_T^{(d)}\geq a_d\}}\right),\]
where $a=(a_1,...,a_d)\in(0,\infty)^d$. For highly correlated component processes and high dimensions the evaluation of these expectations takes much computational time. In these cases it is more reasonable to take the prices of European exchange options between the assets, since these integrals can be computed explicitly. Let us remark that all reasonable choices we tried lead to good results. We again used the point $x_0=(100,...,100)$ as the optimization point. Exact results and computational time can be found in the following tables. Summarizing we can say
\begin{itemize}
\item Optimizing for one special starting point gives very accurate approximations of the value function in the continuation set.
\item The same is true for varying time horizons.
\item The algorithm also works for large dimensions (e.g. $\geq 10$), where normally only Monte-Carlo methods are applicable.
\end{itemize}

\begin{table}\label{tab:minput2d}
\begin{center}
\begin{tabular}[ht]{|l||c|c|c|c|}
  \hline
  $x$ &   interval in \cite{Ro}&RLP &Comp. time\\
  \hline\hline
  $(80,80)$ &[38.01~,~38.35] & 38.30 &\\
	\hline
	$(80,100)$ &[32.23~,~32.60]& 32.28 &\\
	\hline
	$(80,120)$  &[28.54~,~29.01]&   28.58 &\\
	\hline
  $(100,80)$  & [33.34~,~33.59]& 33.53&\\
	\hline
	$(100,100)$   &[25.81~,~26.02]& 25.86& $42s$\\
		\hline
	$(100,120)$ &[20.75~,~21.05]&20.73&\\
		\hline
	$(120,80)$  &[31.21~,~31.31]& 31.30&\\
		\hline
	$(120,100)$  &[22.77~,~22.83]& 22.80&\\
		\hline
	$(120,120)$  &[16.98~,~16.98]& 16.99&\\
		\hline
\end{tabular}
\caption{This table states the results for the min-put problem in a Black-Scholes market with two assets and parameters $K=100, r=0.06, T=0.5, \sigma_1=0.4,\sigma_2=0.8, n=150$. $x$ denotes the starting value. We applied our algorithm with starting vale $(100,100)$ and obtained the further approximations reported in column ``RLP'' by evaluating the approximation $h^*$. For comparison we also state the values from \cite{Ro}.}
\end{center}
\end{table}
\begin{table}
\begin{center}
\begin{tabular}[ht]{|l||c|c|c|c|}
  \hline
  $d$ & interval in \cite{Ro}&  RLP &Comp. time\\
  \hline\hline
  $2$  &[24.87~,~25.16]& 24.93 &41 s\\
	\hline
	$3$ &[31.21~,~31.76]& 31.41 & 72 s\\
	\hline
	$4$  &[35,72~,~36.28]&   36.01& 115 s\\
	\hline
  $5$  & [39.01~,~39.47]& 39.21& 103 s\\
	\hline
	$10$   &[47.99~,~48.33]& 48.01& 324 s\\
		\hline
	$15$ &[52.23~,~52.14]& 52.10 & 612 s \\
		\hline
\end{tabular}
\caption{In this table we state the results in the same setting as in Table \ref{tab:minput2d} for dimensions 2 to 15. The parameters now are $\sigma_i=0.6$, $T=0.5$, $K=100$, $r=0.06$}
\end{center}
\end{table}

\subsection{Exercise boundary}
Using our approximation of the value function we can also approximate the optimal exercise boundary: For the true value function $v$ and each $t\in[0,T]$ the exercise boundary is characterized as the largest 0 of $v(t,\cdot)-g(\cdot)$. Therefore for an approximation we use the minimizer $b(t)$ of $(h^*-g)(t,\cdot)$ for each $t\in[0,T]$. A priori it is not clear that this approximation will be accurate even if $h^*$ is a good approximation of the value function, since $b(\cdot)$ is very sensitive to the shape of $h^*$ in both variables $x$ and $t$. Nonetheless this approximation is very good as shown in the figure below. There we compared the boundary to the approximations by standard methods given in \cite{LS}. We would like to underline that the approximations is even good for small time horizons.

\begin{figure}[ht]
\begin{center}
\includegraphics[width=10cm]{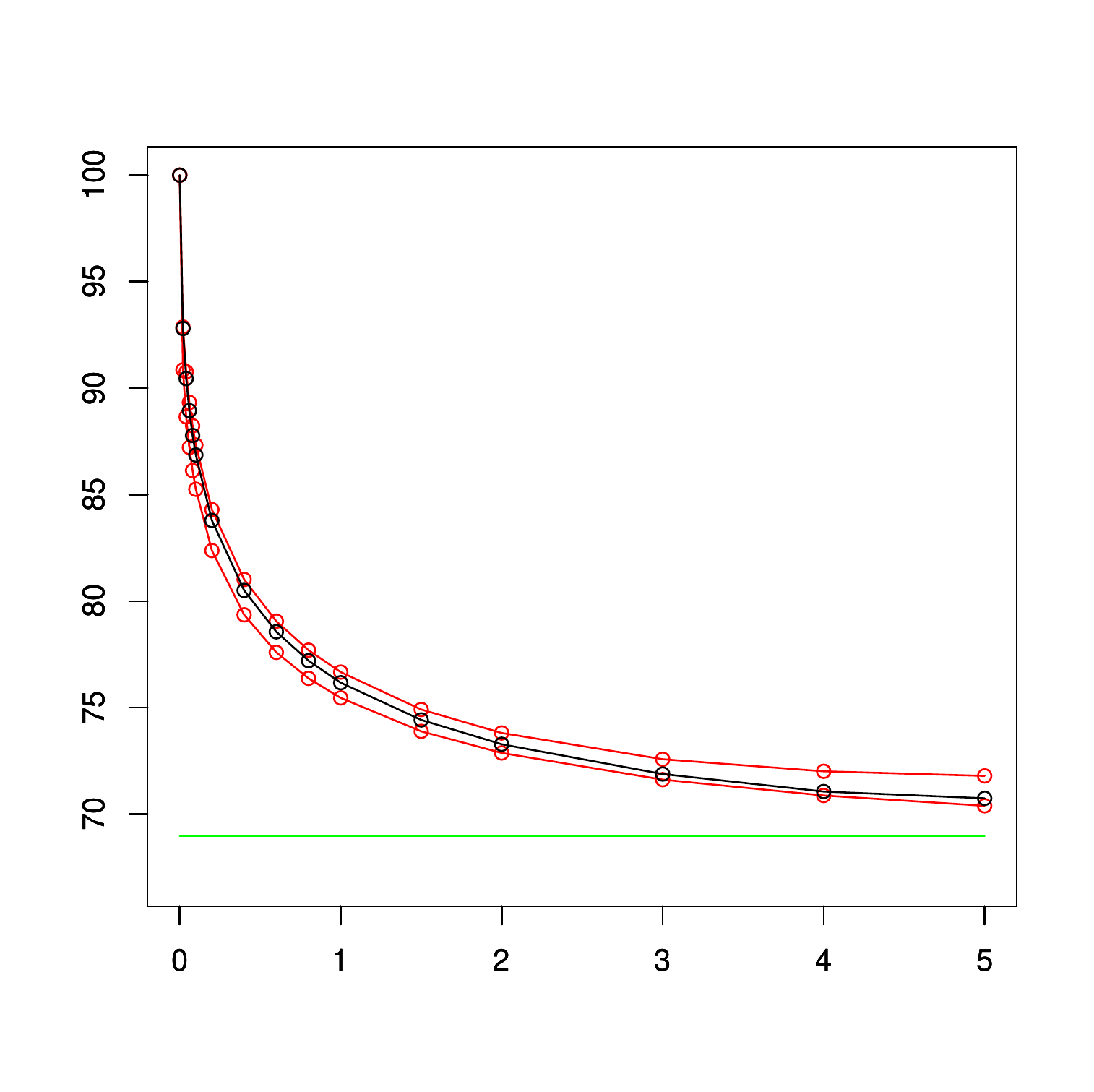}
\caption{The black graph is the approximated stopping boundary $b(\cdot)$ for parameters $T=1, K=100, r=0.1$ obtained by our algorithm. The red graphs are approximations taken from \cite{LS}, where the upper one is obtained by the PSOR-method and the lower one is the analytical approximation given by Zho. For comparison the green line is the optimal stopping boundary for the perpetual American put.}
\end{center}
\end{figure}

\subsection{Calculation of the Greeks}
For risk management and hedging the Greeks (sensitives) of the option play a major role. The Delta of the option -- i.e. the derivative with respect to the asset price -- is of special interest. Using Monte-Carlo techniques it is not straightforward to calculate it, see \cite[Chapter 7]{Gla}. Using our method we obtain an approximation $h^*(t,x)$ to the value function on the continuation set as a function of space and time. Therefore we can calculate the Delta and the Theta of the option by simply taking derivatives of $h^*$ with respect to $x$ resp. $t$. Comparing with the results in the literature yields that these estimates are very accurate. 

\subsection{Calculation of implied volatilities}
Another important topic in the valuation of American options is finding the implied volatilities for a given market price $v_0$. From a first view our approach does not seem to be reasonable for  this question, because the value function for one special volatility first of all does not give information about the values for other volatilities. In the following we discuss how this important topic can nonetheless be dealt with:\\
For a fixed starting volatility $\sigma_1$ we can approximate the price of the asset using our algorithm by finding an approximation $h_1(0,x;\sigma_1)$ and can compare this result with the market price $v_0$. Note that this expression gives an explicit function of $\sigma_1$, but for $\sigma\not=\sigma_1$ it is not clear if $h_1(0,x;\sigma)$ is an accurate approximation of the price in the model with volatility $\sigma$. But nonetheless we can solve the equation
\[h_1(0,x;\sigma)=v_0\]
for $\sigma$. Denote the solution by $\sigma_2$. Using our approach again we find a new approximation $h_2(x,0;\sigma_2)$ that can be used to determine $\sigma_3$ and so on. In a general setting there is no hope to prove convergence of this sequence to the implied volatility, but nonetheless in our examples one obtains a very accurate approximation after three or four steps of iteration even for starting volatilities that are fare away from the correct value. This leads to a very easy to implement method. Let us emphasize that there is no theoretical justification for the approach to work, but nonetheless it seems to work very well.

\section{Infinite time horizon}\label{sec:infinite_time}
After discussing finite time horizon problems in the previous section now we want to discuss the case of an infinite time horizon. For practical questions in financial markets this case is not so important; perpetual options are only used as a bound for finite time problems. Nonetheless for other applications such as sequential statistics and portfolio optimization numerical solutions are of importance. Most other numerical methods cannot be applied to these problems, since a discretization of an infinite time horizon would be necessary. Exception are the Forward Improvement Iteration algorithm discussed in \cite{I} and the results of \cite{CS}.

\subsection{Multidimensional diffusions}
In the following we are interested in the case that $X=(X_t)_{t\geq0}$ is a diffusion process with state space $E\subseteq \R^d, d\geq 1$. For applying our approach we first have to take a suitable subset of the class of $r$-excessive functions w.r.t. $X$ with a suitable parametrization. As in the case of finite time horizon we propose to choose a class of $r$-harmonic functions on $E$. In this setting we propose to take the minimal $r$-harmonic functions; that are the extreme points of the set of all $r$-harmonic functions on $E$ and can be characterized using the Martin boundary. There is a one-to-one correspondence between minimal $r$-harmonic functions.\\
Next we give an examples of interest in mathematical finance where explicit results can be obtained.

\begin{prop}\label{harm-brown}
Let $X$ be a $d$-dimensional Brownian motion on $\R^d$ with covariance matrix $(\sigma_{ij})$ and drift $\mu=(\mu_1,..,\mu_d)$, i.e. the generator of $X$ is given by 
\[L=\frac{1}{2}\sum_{i,j}\sigma_{ij}\frac{\partial^2}{\partial x_i\partial x_j}+\sum_{i}\mu_i\frac{\partial}{\partial x_i}.\]
Write
\[A=\left\{a\in\R^d: \frac{1}{2}\sum_{i,j=1}^d\sigma_{ij}a_ia_j+\sum_{i=1}^d\mu_ia_i-r=0\right\}.\]
Then 
\[\{x\mapsto \exp(a\bullet x):a\in A\}\]
is the set of all minimal positive $r$-harmonic functions, where $\bullet$ denotes the usual scalar product.
\end{prop}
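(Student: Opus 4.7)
The plan splits naturally into two directions. For the easy direction, I would fix $a\in A$, set $h_a(x):=\exp(a\bullet x)$, and compute directly
\[
Lh_a(x)=\Bigl(\tfrac{1}{2}\sum_{i,j}\sigma_{ij}a_ia_j+\sum_i\mu_i a_i\Bigr)\exp(a\bullet x)=r\exp(a\bullet x),
\]
the last equality being precisely the defining equation of $A$. Together with positivity and the smoothness of the transition density, this yields $r$-harmonicity in the semigroup sense. Minimality of $h_a$ in the cone of positive $r$-harmonic functions then follows from its multiplicative property $h_a(x+y)=h_a(x)h_a(y)$, which is the characteristic extreme-point signature for translation-invariant Markov processes.

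For the converse, let $h$ be any minimal positive $r$-harmonic function. The key input is the translation invariance of $L$: for every $y\in\R^d$ the shifted function $h_y(\cdot):=h(\cdot+y)$ is again positive $r$-harmonic. Writing $r$-harmonicity in semigroup form gives
\[
h(x)=\int h(x+z)\,\rho_t(dz),\qquad t>0,
\]
where $\rho_t(dz)=e^{-rt}p_t(z)\,dz$ and $p_t$ is the density of the increment $X_t-X_0$. I would next perform a Doob transform with respect to any fixed $h_{a_0}$, $a_0\in A$: the transformed process is again translation invariant, and its kernel becomes a genuine probability convolution semigroup on $\R^d$ (the choice $a_0\in A$ normalizes the total mass to one). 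Under the transform, $h$ is $r$-harmonic for $X$ if and only if $g:=h/h_{a_0}$ is harmonic for the transformed process. The classical Choquet-Deny theorem then applies to $g$; minimality forces $g$ to be a multiplicative character, so $g(x)=\exp(b\bullet x)$ for some $b\in\R^d$. Undoing the transform gives $h(x)=\exp((a_0+b)\bullet x)$, and the eigenvalue equation $Lh=rh$ forces $a_0+b\in A$.

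The main obstacle is the passage from minimality to multiplicativity. A fully direct argument would require a Harnack-type bound of the form $h(x+y)\le C(y)h(x)$ in order to exploit the defining property of minimality; such bounds are routine locally but awkward globally due to the exponential growth that is permitted. The Doob-transform reduction above is the tidiest way around this, since it reduces matters to the classical Choquet-Deny statement for translation-invariant probability semigroups on $\R^d$. Alternatively one could simply appeal to the Martin boundary classification for Brownian motion with drift from \cite{KW}, where the minimal Martin boundary is known to be homeomorphic to the ellipsoid $A$ with Martin kernels given precisely by the exponentials above.
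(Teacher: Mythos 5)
The paper does not actually prove this proposition: its entire ``proof'' is the remark that the result is well known, with a pointer to \cite{CI2}. So your proposal is by necessity a different route --- it is the standard potential-theoretic argument, and it is essentially sound. The computation $Lh_a=rh_a$, or better the exact Gaussian identity $E_x\left(e^{-rt}e^{a\bullet X_t}\right)=e^{a\bullet x}$, which holds precisely when $a\in A$, settles $r$-harmonicity in the semigroup sense without any generator-versus-semigroup worries. The Doob transform by $h_{a_0}$ is also correct: for $a_0\in A$ the transformed kernel is again a translation-invariant Markov convolution semigroup (a Brownian motion with covariance $\Sigma=(\sigma_{ij})$ and drift $\mu+\Sigma a_0$), the map $h\mapsto h/h_{a_0}$ is an order isomorphism of the relevant cones so minimality is preserved, and Deny's theorem (the positive, unbounded version of Choquet--Deny) identifies the extreme rays with exponentials. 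One point worth making explicit: Deny's theorem is stated for a single convolution equation $g=g*\rho_t$, but for the Gaussian semigroup $\int e^{b\bullet z}\rho_t(dz)=e^{t\psi(b)}$ with $\psi$ quadratic, so invariance for one $t$ forces $\psi(b)=0$ and hence invariance for all $t$; the two cones coincide and the minimality transfer is harmless.

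Two caveats. First, the one step you assert rather than argue --- ``minimality follows from the multiplicative property, the characteristic extreme-point signature'' --- is backwards as stated: multiplicativity of $h_a$ is trivial and does not by itself yield extremality. What you need is either the uniqueness of the representing measure in Deny's integral representation, or, more cheaply, the same Doob transform you deploy later: if $0\le u\le h_a$ is $r$-harmonic, then $u/h_a$ is a \emph{bounded} invariant function of the transformed Brownian motion with drift, hence constant by the classical bounded Choquet--Deny theorem, which gives minimality directly. Second, you implicitly need $(\sigma_{ij})$ nondegenerate (otherwise the transformed measures are not spread out on $\R^d$ and the classification changes) and $A\neq\emptyset$; both hold in the paper's setting, since $\Sigma$ is a genuine covariance matrix and completing the square gives $\frac{1}{2}(a+\Sigma^{-1}\mu)^{T}\Sigma(a+\Sigma^{-1}\mu)=r+\frac{1}{2}\mu^{T}\Sigma^{-1}\mu$, so $A$ is a nonempty ellipsoid for $r>0$. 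Your fallback option --- citing the Martin boundary classification from \cite{KW} --- is in spirit exactly what the paper itself does with its one-line reference to \cite{CI2}, so everything beyond that in your write-up is added value rather than a deviation.
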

This result is well-known, a discussion for this situation can be found in \cite{CI2}. As a standard example for a multidimensional problem we consider a perpetual American put option on on index, i.e. on a linear combination of assets. This means we consider $d$ (correlated) Brownian motions $X^{(1)},...,X^{(d)}$ with drifts $\mu_1,...,\mu_d$ and covariance structure $(\sigma_{ij})_{i,j=1}^d$. We interpret $e^{X^{(1)}},...,e^{X^{(d)}}$ as $d$ assets in a Black-Scholes market. Our gain function is given by 
\[g(x_1,...,x_d)=(K-\sum_{i=1}^d\alpha_ie^{x_i})^+\mbox{~~~for all~}(x_1,...,x_d)\in\R^d.\]
Here $\alpha_1,...,\alpha_d$ are positive weight parameters. These options were considered from different points of view, see e.g. \cite{P}. \\
To use our approach Proposition \ref{harm-brown} suggests to take 
\[A:=\left\{a\in\R^d: \frac{1}{2}\sum_{i,j=1}^d\sigma_{ij}a_ia_j+\sum_{i=1}^d\mu_ia_i-r=0\right\}.\]
and
\[H':=\{x\mapsto \exp(a\bullet x):a\in A\}.\]
The set $A$ is an ellipsoid and we can choose the random parameters $a^{(1)},...,a^{(n)}$ from a uniform distribution on $A$.\\
One cannot expect to obtain explicit results for this problem, so that we have to compare our results to other approximative results. For this reason we use the forward improvement iteration algorithm discussed in \cite{I}. This algorithm can be applied easily in dimension $d=2$, so that we use it for our comparison there. We used the forward improvement iteration algorithm with a discretization of $[0,2]\times [0,2]$ in $100\times 100$ points.
The results are given in the following table. Here the approximation of $v(x)$ using our approach is denoted by RLP, the results by the forward improvement iteration algorithm is denoted by FII.

\begin{table}
\begin{center}
\begin{tabular}[ht]{|l||c|c|c|c|c|}
  \hline
  $x_0$ & $g(x_0)$& FII & RLP & stopping point& |RLP-FII|\\
  \hline\hline
	$(0.7,0.2)$ &6.764&6.764& 6.764&yes&0\\
	\hline
  $(0.7,0.7)$ &5.972&5.976& 5.977&yes&0.005\\
	\hline
  $(1,1)$ &4.563&4.894& 4.944&no&0.05\\
	\hline
	$(1.4,0.6)$ &4.122&4.790& 4.778&no&0.012\\
	\hline
\end{tabular}
\caption{Results for infinite time horizon and a put on an index. We used the parameters $\mu_1=\mu_2=0, \sigma_{ij}=\delta_{ij}, r=0.1, K=10, \alpha_1=\alpha_2=1, n=30$. With FII we denote the value obtained by the forward improvement iteration algorithm and by RLP the values obtained by our approach. The optimization toke around 12 seconds.}
\end{center}
\end{table}

Although the forward iteration improvement algorithm is limited to low dimensions, our algorithm is not. It is no problem to to apply it to high dimensional problems.\\
%
%

\subsection{Lévy processes with infinite time horizon}\label{levy}
Lévy processes are an important class of jump processes that can be used in many fields of application such as insurance and finance. Optimal stopping problems with infinite time horizon involving Lévy processes were studied from different points of view in the last years. For these problems overshoot plays a fundamental role. This leads to certain problems for an explicit solution. For certain gain functions -- such as power functions and functions of put-/call-type -- semi-explicit solutions were obtained in the terms of the running maximum resp. minimum of the process, cf. \cite{Mo}, \cite{NS2}, \cite{ms2} and \cite{CI}. \\
To use our approach we again use the following potential-theoretic representation of $r$-excessive functions: \\
As usual we define the resolvent kernel $U^r$ by 
\[U^r(x,A)=E_x\left(\int_0^\infty e^{-rt}1_{\{X_t\in A\}}dt\right),~~A\mbox{~measurable},~x\in\R^d\]
and we assume that $U^r(x,\cdot)$ is absolutely continuous with respect to the Lebesgue measure for all $x\in\R^d$. See \cite[Chapter I.3]{b} for a characterization; all the next facts can also be found in this reference.\\
In the above situation there exists a unique measurable function $h:\R^d\rightarrow[0,\infty]$ such that $h(\cdot-x)$ is a Lebesgue density of $U^r(x,\cdot)$ for each $x\in\R^d$ and $y\mapsto h(-y)$ is $r$-excessive. For certain processes $h$ can be calculated explicitly.
Now we can formulate the important representation result in the spirit of equation (\ref{int_repr}):
\begin{prop}
Any integrable $r$-excessive function $w$ can be represented as 
\[w(x)=\int_{\R^d} h(a-x)\pi(da),~x\in\R^d,\]
where $\pi$ is a unique finite measure on $\R^d$. 
\end{prop}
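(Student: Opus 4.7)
The plan is to recognize the right-hand side as the $r$-potential $U^r\pi(x) := \int_{\R^d} h(a-x)\pi(da)$ and to obtain the representation via a Riesz-type decomposition for the transient resolvent $U^r$. First I would check that for every finite measure $\pi$ on $\R^d$ the function $U^r\pi$ is indeed $r$-excessive: by translation invariance of the L\'evy semigroup, $h(a-\cdot)$ is $r$-excessive for each fixed $a$ (since $y\mapsto h(-y)$ is), and Fubini together with the stability of $r$-excessive functions under integration against nonnegative measures gives the claim. This shows that the class of functions of the stated form is contained in the integrable $r$-excessive cone.

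For the reverse inclusion I would invoke the Riesz decomposition theorem for the transient resolvent (cf.\ \cite{b}, Chapter I): every $r$-excessive function $w$ can be decomposed uniquely as $w = U^r\pi + w_{\mathrm{inv}}$, where $\pi$ is a nonnegative Radon measure and $w_{\mathrm{inv}}$ is $r$-invariant in the sense $e^{-rt}P_t w_{\mathrm{inv}} = w_{\mathrm{inv}}$. The crucial step is then to rule out the invariant piece under the integrability hypothesis. Since Lebesgue measure is invariant for every L\'evy semigroup, integrating the identity $e^{-rt}P_t w_{\mathrm{inv}} = w_{\mathrm{inv}}$ against Lebesgue measure and using Fubini gives
\[
\int w_{\mathrm{inv}}(x)\,dx \;=\; e^{-rt}\int w_{\mathrm{inv}}(x)\,dx\qquad\text{for every }t\geq 0,
\]
and because $0\leq w_{\mathrm{inv}}\leq w$ is integrable, this forces $\int w_{\mathrm{inv}}\,dx = 0$ when $r>0$, hence $w_{\mathrm{inv}}=0$ Lebesgue-a.e.; the lower semicontinuity inherent in $r$-excessivity then propagates the equality to all $x$. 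Finiteness of $\pi$ follows from one more application of Fubini: $\pi(\R^d)\,\int h(-y)\,dy = \int w(x)\,dx <\infty$.

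For uniqueness I would pass to Fourier transforms: if $U^r\pi_1=U^r\pi_2$, then $\widehat h(\xi)\widehat{\pi_1}(\xi)=\widehat h(\xi)\widehat{\pi_2}(\xi)$, and $\widehat h(\xi)=1/(r-\psi(\xi))$ with $\psi$ the characteristic exponent of $X$ is non-vanishing for $r>0$, so $\widehat{\pi_1}=\widehat{\pi_2}$ and hence $\pi_1=\pi_2$. The main obstacle is the vanishing of the invariant part: the integration-Fubini argument requires care (one uses $w_{\mathrm{inv}}\leq w\in L^1$ together with the absolute continuity of $U^r(x,\cdot)$), and in borderline cases one can replace it by the martingale route, observing that $M_t:=e^{-rt}w_{\mathrm{inv}}(X_t)$ is a nonnegative $P_x$-martingale, arguing $M_t\to 0$ by transience of $U^r$ and integrability, and concluding $w_{\mathrm{inv}}(x)=E_x(M_0)=0$.
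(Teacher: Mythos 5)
The paper offers no proof of this proposition at all: it is quoted from the literature, with the pointer to \cite[Chapter I.3]{b} (and to \cite{ms2} for the undiscounted case), so there is no in-paper argument to compare against. Your Riesz-decomposition route is precisely the standard proof in those references, and your sketch is essentially correct in all its main steps: $U^r\pi$ is $r$-excessive for any finite $\pi$ by translation invariance and integration; the decomposition $w=U^r\pi+w_{\mathrm{inv}}$ applies because the $r$-killed process ($r>0$) is transient and the assumed absolute continuity of the resolvent provides the duality (the dual process being $-X$); integrating the invariance identity against Lebesgue measure, which is invariant for every L\'evy semigroup, gives $\int w_{\mathrm{inv}}\,dx=e^{-rt}\int w_{\mathrm{inv}}\,dx$ and hence $w_{\mathrm{inv}}=0$ a.e.; finiteness of $\pi$ is even quantitative, since $\int_{\R^d}h(-y)\,dy=U^r(0,\R^d)=1/r$, so $\pi(\R^d)=r\int w\,dx$; and the Fourier uniqueness works because $\widehat{h}(\xi)=(r-\psi(\xi))^{-1}$ with $\mathrm{Re}\,(r-\psi(\xi))\geq r>0$ never vanishes, and finite measures are determined by their characteristic functions.

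One step needs repair: your appeal to ``the lower semicontinuity inherent in $r$-excessivity'' to upgrade $w_{\mathrm{inv}}=0$ Lebesgue-a.e.\ to $w_{\mathrm{inv}}\equiv 0$ is not sound as stated, because excessive functions of a general L\'evy process need not be lower semicontinuous (they are only finely continuous). The correct mechanism is the one you mention only parenthetically, namely absolute continuity of the resolvent: since $w_{\mathrm{inv}}$ is itself $r$-excessive (the invariance is an equality), one has $qU^{q+r}w_{\mathrm{inv}}\uparrow w_{\mathrm{inv}}$ pointwise as $q\to\infty$, and each $U^{q+r}(x,\cdot)$ with $q>0$ inherits absolute continuity from $U^r(x,\cdot)$ via the resolvent equation (which gives $U^{q+r}(x,A)\leq U^r(x,A)$ for all $A$); hence $qU^{q+r}w_{\mathrm{inv}}(x)=0$ for \emph{every} $x$ once $w_{\mathrm{inv}}$ vanishes a.e., and so $w_{\mathrm{inv}}\equiv0$. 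With that substitution (or with your alternative martingale argument, made precise), your proof is complete and matches the approach of the cited sources.
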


\begin{figure}[h]
\begin{center}
\includegraphics[width=10cm]{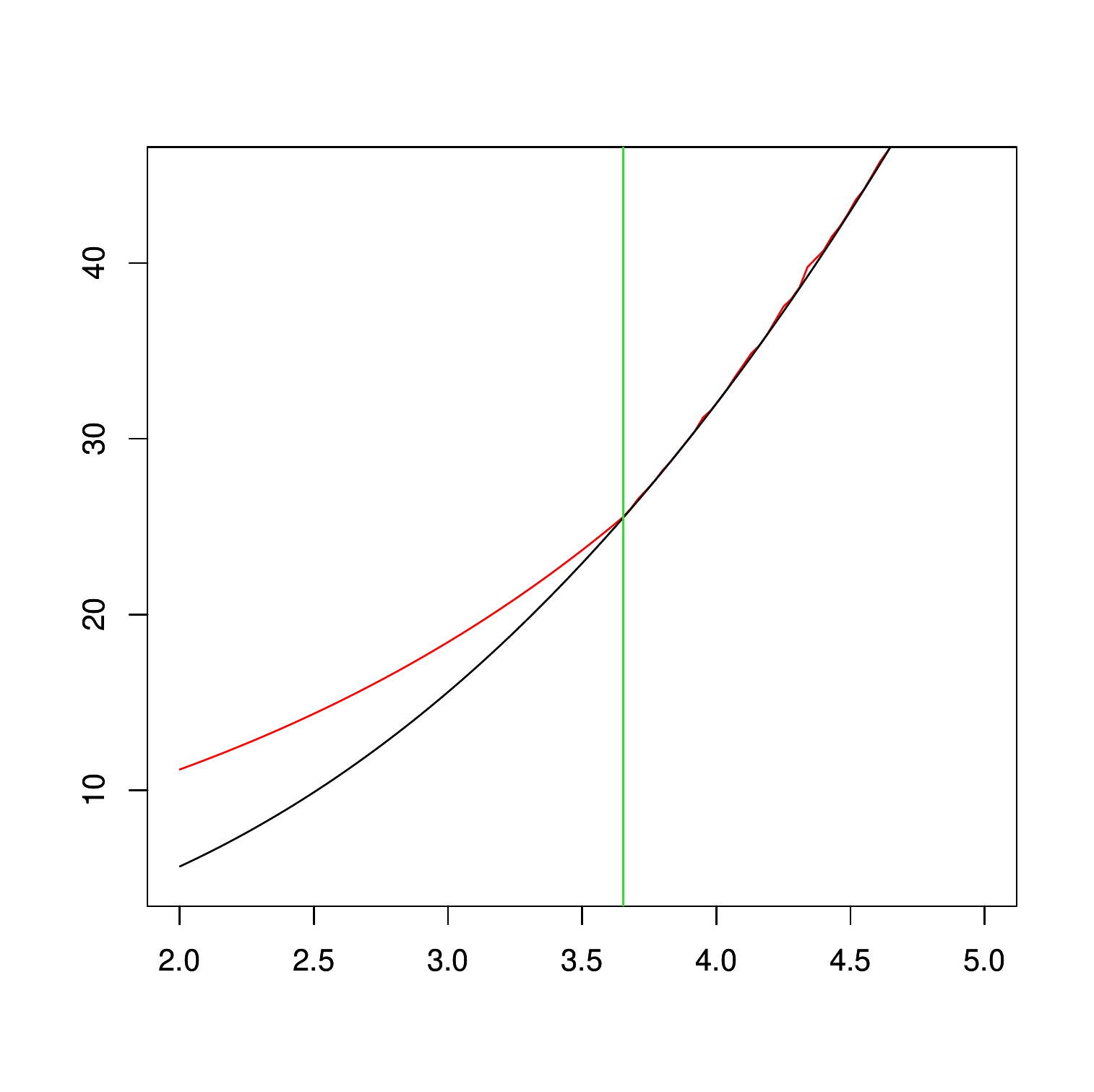}\label{fig:levy}
\caption{Gain function (black) and approximated value function (red) in the Lévy example for parameters $r=2, \alpha=1, \rho=0.5, \lambda/c=-0.5, \gamma=2.5$.}
\end{center}
\end{figure}

\begin{anmerk}
For $r=0$, i.e. for problems without discounting, an analogous result holds, if the Lévy process is assumed to be transient, see \cite{ms2} for the case $d=1$.
\end{anmerk}

Now we can use the algorithm described above by taking 
\[A=\R^d,~~H'=\{y\mapsto h(a-y):a\in A\}.\]
As an example we consider the Novikov-Shiryaev problem, i.e. we use $g(x)=(x^+)^\gamma$, $\gamma\geq1$ as the gain function. This problem was completely solved semi-explicitly in \cite{NS} and \cite{NS2}.\\
To check our numerical approach for Lévy processes we would like to compare our numerical results to explicit results. To this end we assume $X$ to be a compound Poisson process with drift and positive exponential jumps, i.e. $X$ has the form
\[X_t=ct+\sum_{i=1}^{N_t}Y_i,~~t\geq0,\]
where $c<0$, $(N_t)_{t\geq 0}$ is a Poisson process with intensity $\lambda$ and $(Y_i)_{i\in\N}$ is a sequence of independent $Exp(\alpha)$-distributed random variables. In this setting an explicit solution was obtained in \cite{ms2}. In this case the Green function $h$ is given by
\[h(x)=\begin{cases}
  A_2e^{\rho x},  & x\leq 0 \\
  -A_1, & x>0,
\end{cases}\]
where $\rho:=\alpha+\lambda/c>0$, $A_1:=\alpha/(\lambda+c\alpha)$ and $A_2:=\lambda/(a\lambda+a^2\alpha)$, see \cite[Section 5]{ms2}. \\
We applied our algorithm with $n=150$, and choosing the parameters $a^{(1)},...,a^{(n)}$ according to a uniform distribution on the interval $[0,20]$ turned out to be a reasonable choice. The computational results are given in Figure \ref{fig:levy}. Using these $r$-excessive functions we obtain a good approximation not only on the continuation set, but also on the optimal stopping set. The valuation of American options with finite time horizon in Lévy markets will be discussed in detail in a forthcoming paper.


\section{Lower bounds of the value function}\label{sec:lower_bounds}
As explained above our method immediately leads to good upper bounds of the value function. This is indeed the important contribution of our approach since most Monte-Carlo methods leads to good lower bounds, but nonetheless to deal with new problems one also would like to obtain lower bounds for the value. In this section we discuss how this can be realized using our approach. \\
For easy examples like the American put in the Black-Scholes market the method also gives an approximation to the stopping boundary. Using the stopping time associated with this boundary leads to very good lower bounds.\\
For more complex examples the idea is to use the approximation of the value function $h^*=\sum_{i=1}^n\lambda_i^*h_{i}$. The first idea is to choose the optimal stopping time 
\[\tau^*=\inf\{t\geq 0: g(X_t)\geq v(t,X_t)\}\]
and to substitute $v$ by $h^*$. But since $h^*\geq g$ and $h^*$ is just an approximation to $v$ this stopping time does not seem to be appropriate. Instead we choose $\epsilon>0$ and take 
\[\tau'=\inf\{t\geq 0: g(X_t)+\epsilon\geq v(t,X_t)\},\]
then it is well known that $\tau'$ is $\epsilon$-optimal in the sense that 
\[E_(t,x)(e^{-r\tau'}g(X_{\tau'}))\geq v(t,x)-\epsilon\]
under minimal conditions. Now we take 
\[\tau_0:=\inf\{t\geq 0: g(X_t)+\epsilon\geq h^*(t,X_t)\}\]
and use 
\[h_*(t,x):=E_{(t,x)}(e^{-r\tau_0}g(X_{\tau_0}))\]
is a lower bound of $v(t,x)$. In most problems we cannot find analytical expressions for the expectation on the right hand side, but it can be approximated using Monte Carlo techniques. In our examples the lower approximations were quite near the upper ones.\\
Another approach is to use the $r$-harmonic function $h^*$ for variants of other methods like the Longstaff-Schwartz algorithm. This will be discussed by the author in a forthcoming paper.

\section{Conclusion}\label{sec:conclusion}
As a conclusion let us summarize the important properties of the approach described above:
\begin{itemize}
\item The algorithm is based on reducing the ILP-problem connected to optimal stopping to a SILP-problem by choosing finitely many $r$-excessive functions.
\item No discretization of space and time and no Monte-Carlo-elements are necessary.
\item The algorithm is very easy to implement in every common language (1 page of programming code!).
\item Optimizing for one special starting point gives very accurate upper bound of the value function in the continuation set.
\item The same is true for varying time horizons.
\item The algorithm also works for large dimensions (e.g. $\geq 10$), where apart from it only Monte-Carlo methods are applicable.
\item The Greeks can be found immediately.
\item Implicit volatilities can be calculated.
\item The algorithm can be used for infinite time horizons, too.
\end{itemize}

\bibliographystyle{plain}
\bibliography{numerik}

\end{document}